\numberwithin{equation}{section} 
\newcommand{\bI}{{\bf I}}
\newcommand{\bo}{{\bf 0}}
\newcommand{\bone}{{\bf 1}}
\newcommand{\bX}{{\bf X}}
\newcommand{\bx}{{\bf x}}
\newcommand{\bxi}{{\bm \xi}}
\newcommand{\bS}{{\bm \Sigma}}
\newcommand{\btea}{{\bm \theta}}
\newcommand{\bmu}{{\bm \mu}}
\newcommand{\mb}{\mbox}
\newcommand{\md}{\mbox{d}}
\newcommand{\mS}{\bf S}
\newtheorem{theorem}{Theorem}[section]
\newtheorem{lemma}{Lemma}[section]
\newtheorem{definition}{Definition}[section]
\newtheorem{corollary}{Corollary}[section]
\newtheorem{proposition}{Proposition}[section]
\journal{Journal}
\begin{document}
\begin{frontmatter}



\title{Modifications of 
Wald's score tests on large dimensional covariance matrices structure.}



\author[j]{Dandan Jiang\corref{cor1}\fnref{myfootnote}}
\fntext[myfootnote]{Supported by Project 11471140 from NSFC.}
\corref{Dan}
\cortext[cor1]{Corresponding author:Dandan Jiang}
\address[j]{School of Mathematics, \\
Jilin University, \\
2699  QianJin Street, \\
Changchun {\rm 130012}, China.}
\ead{jiangdandan@jlu.edu.cn}

\author[j]{Qibin Zhang}
\ead{jasonzhangqb@gmail.com}

\begin{abstract}

This paper considers testing the covariance matrices structure based on Wald's score test in large dimensional setting. The hypothesis $H_0: \bS  =\bS_0 $ for a given matrix  $\bS_0$, which covers the identity hypothesis test and sphericity hypothesis test as the special cases, is reviewed by the generalized CLT (Central Limit Theorem) for the linear spectral statistics of large dimensional sample covariance matrices from \cite{J15}.  The proposed tests can be applicable for large dimensional non-Gaussian variables in a wider range. Furthermore, the simulation study is provided to compare the proposed tests with other large dimensional covariance matrix tests for evaluation of their performances. As seen from the simulation results, our proposed tests are feasible for large dimensional data without restriction of population distribution and provide the accurate and steady empirical sizes, which are almost around the nominal size.

\end{abstract}

\begin{keyword}
Large dimensional data \sep Covariance structure tests \sep Wald's score test \sep Random matrix theory

\MSC[2010] 62H15\sep  62H10

\end{keyword}

\end{frontmatter}


\section{Introduction} \label{Int}
The  hypothesis of testing   the covariance  matrices structure is represented as 
\begin{equation}
H_0 : \bS=\bS_0  \quad \mb{v.s.} \quad  H_1: \bS \neq \bS_0, \label{H1}
\end{equation}
which  plays an important role in classical multivariate
analysis, and it has been widely used in the social and behavioural sciences  etc.  However, the rapid  development of data 
 acquisition and computer science have  challenged the traditional statistical methods,
  because they are established on the basis of fixed dimension $p$ and fail to analyze  large dimensional data. This  inspired growing interests in  
 proposing new testing procedures designed for  the large dimensional data.
In this aspect, \cite{Johnstone}, \cite{LedoitWolf}  and \cite{Srivastava} gave the large dimensional tests  based on some different  functions of  the distance between the null and alternative hypotheses under the Gaussian assumption. Bai et al.(2009)   and Jiang  et al.(2012)  derived the asymptotic distributions of their proposed test statistics in large limiting scheme  $p/n \rightarrow c \in [0,1)$ and $p/n \rightarrow c \in [0,1]$ with $p<n$, respectively. 
Recently, Chen et al.(2010) proposed  a  nonparametric test    regardless of the limiting  behavior of $p/n$ and   Cai and Ma(2013)  presented an  optimal test on 
 high dimensional covariance matrix  from a minimax point of view. Also, Jiang(2015) used the RMT (Random Matrix Theory) to correct  the Rao's score test, which is applicable  for large dimensional data  without restriction of  Gaussian assumption.

  Motivated by the above works, we reviewed the testing  problem (\ref{H1})  and proposed some  new tests based on amending the Wald's score tests by RMT.  
  We first  derived Wald's score test statistic for the hypothesis  (\ref{H1})  and  refined it to satisfy  the  enhanced version of the CLT(Central Limit Theorem) for LSS ( Linear Spectral Statistics) of large dimensional sample covariance matrices in Jiang(2015), which is 
  the basic tools used in  the amendment process. Then  we  proposed  the correction to  Wald's score test statistics and derived their asymptotic distribution  under large  limiting scheme  $p/(n-1) \rightarrow q \in   [0,1)$, which provided more accurate and steady empirical sizes.

The remainder  of the article  is arranged as below. A quick survey  of  Wald's score test and  an enhanced version of the  large dimensional CLT cited from Jiang(2015)  are reviewed in Section \ref{Pre},  and the  details of   Wald's score tests on  covariance  structure are also derived in this part.
In Section \ref{New}, the  new  testing statistics  based on the aforementioned Wald's score tests are proposed by the large dimensional  limiting tools in RMT,  which compensate for  the effects of  large  dimensionality.    Then the simulations are conducted  to evaluate the performance
of  our proposed  tests compared with other large dimensional tests in Section \ref{Sim}. Finally,  conclusions and comments are drawn  
in the  Section \ref{Con}, and the proofs and derivations are listed in the \ref{App}.  


\section{ Preliminary}  \label{Pre}

Let $\chi=(\bx_1,\cdots,\bx_n)$  denote a sample from  a random vector $\bX$ following  the population distribution  $F_{\bX}(x,\btea)$, where  $\btea$ is an unknown parameter.
Make some notations as below:
\begin{itemize}
\item  $f_{\bX}(x,\btea)$ is the density function of $\bX$;
\item  $U(\bX,\btea)=\displaystyle\frac{\md}{\md \btea}\mb{ln} f_{\bX}(x,\btea)$ is the score vector  of $\bX$;
\item  $I (\bX,\btea)=\mb E(U(\bX,\btea)U'(\bX,\btea))$  is the  information matrix of $\bX$, which is  also calculated by Hessian matrix $H(\bX,\btea)$ as below:
\[I (\bX,\btea)=- \mb E (H(\bX,\btea))=- \mb E (\frac{\md^2}{\md \btea^2}\mb{ln} f_{\bX}(x,\btea))\]
\end{itemize}
Then the log-likelihood, the score function and the information matrix of the sample are given by 
$l(\chi,\btea)=\sum\limits_{i=1}^{n}\mb{ln} f(\bx_i,\btea)$, $U(\chi,\btea)=\sum\limits_{i=1}^{n}U(\bx_i,\btea)$ and $I (\chi,\btea)=nI (\bx_1,\btea)$, respectively. 
So the definition of  Wald's score test statistic is described as below:
\begin{definition}
For hypothesis  test $H_0 : \btea=\btea_0  $, Wald's score test statistic(\mb{WST}) is  defined as 
\[
\mb{WST}(\chi, \btea_0)=(T_{\btea}(\chi)-\btea_0)'\cdot I(\chi,T_{\btea}(\chi))\cdot(T_{\btea}(\chi)-\btea_0),
\]
 where $\btea_0=(\theta_{01},\cdots,\theta_{0p})'$ is a known vector,
 $\chi$ is a random sample from the population distribution, $T_{\btea}(\chi)$ is  the maximum likelihood estimator of the parameter $\btea$ and $I(\chi,T_{\btea}(\chi))$ is the information matrix substituting $\btea$ with $T_{\btea}(\chi)$. Then $ \mb{WST}(\chi, \btea_0) $ tends to a  $\chi^2_p$ limiting distribution, which is a $\chi^2$-distribution with $p$ degrees of freedom，
  as $n \rightarrow \infty$ under  $H_0$. (Wald,1943).
\end{definition}

To figure  out the Wald's score test statistic for the hypothesis test (\ref{H1}),   we suppose that  the sample $\chi=(\bx_1,\cdots,\bx_n)$ follows a normal distribution  $N_p(\bmu, \bS)$.  Denote 
$\btea=(\bmu', \mb{vec}'(\bS))' $, where $\mb{vec} ( \cdot )$ is the vectorization  operator. Then  the logarithm  of the density of the sample $\chi$ is
\[l(\chi,\btea)=-\frac{np}{2}\mb{ln}(2\pi)-\frac{n}{2}\ln|\bS|-\frac{1}{2}\sum\limits_{i=1}^{n}\mb{tr}\left(\bS^{-1}(\bx_i-\bmu)(\bx_i-\bmu)'\right).\] 
By some derivations we arrive at  the following several  results [for details  see the Appendix A.1 in \cite{J15}], 
\begin{equation}
U(\chi, \btea)=:\left(
\begin{array}{c}
U_1 (\chi, \btea) \\
U_2(\chi, \btea)
\end{array}
\right)=\left(
\begin{array}{c}
n\bS^{-1}(\hat\bmu-\bmu)  \\
\displaystyle\frac{n}{2}\mb{vec}(\bS^{-1}(\mS\bS^{-1}-\bI_p))
\end{array}
\right) \label{scorevec}
\end{equation}
where  $\frac{\md}{\md \btea}=
\left(
\begin{array}{cc}
\frac{\md}{\md \bmu}   \\
\frac{\md}{\md \mb{vec}(\bS)}
\end{array}
\right)
$ is a $p(p+1) \times 1$ vector and  
\[\hat\bmu=\displaystyle\frac{1}{n}\sum\limits_{i=1}^n\bx_i,   \quad \mS=\displaystyle\frac{1}{n}\sum\limits_{i=1}^{n}(\bx_i-\bmu)(\bx_i-\bmu)'.\] 
According to the definition of Hessian matrix, 
$H(\chi, \btea)=\frac{\md^2}{\md \btea^2}l(\chi,\btea)=: 
\left(
\begin{array}{ccc}
 H_{11} & H_{12}    \\
 H_{21} & H_{22}   
\end{array}
\right)
$,  where $H_{22}$  is the part for the parameter $\bS$ and written as 
\begin{eqnarray}
H_{22}&=&\frac{n}{2}\frac{\md \mb{vec}(\bS^{-1}(\mS\bS^{-1}-\bI_p))}{\md \mb{vec}'(\bS)}\label{H22d}\\ \nonumber 
&=&\frac{n}{2}\frac{\md \mb{vec}(\bS^{-1})}{\md \mb{vec}'(\bS)} \frac{\md \mb{vec}(\bS^{-1}\mS\bS^{-1}-\bS^{-1})}{\md  \mb{vec}'(\bS^{-1})}\\\nonumber
&=& -\frac{n}{2} (\bS^{-1} \otimes \bS^{-1} )(\mS\bS^{-1} \otimes \bI_p+\bI_p\otimes \mS\bS^{-1}-\bI_{p^2}). 
\end{eqnarray}
where $\otimes$ is the Kronecker products.
The information matrix is denoted as   $I(\chi, \btea)=: 
\left(
\begin{array}{ccc}
 I_{11}(\chi, \btea) & I_{12} (\chi, \btea)   \\
 I_{21}(\chi, \btea) & I_{22} (\chi, \btea)  
\end{array}
\right)
$,  then we get
   the part  of   information matrix for $\bS$ 
\[I_{22}(\chi, \btea)=-\mb{E}\left[H_{22}(\chi, \btea)\right]= \frac{n}{2} (\bS^{-1} \otimes \bS^{-1} )\]
 by the expectation   $\mb{E} (\mS)=\mb E [(\bX-\bmu)(\bX-\bmu)']=\bS$.

As stated in  \citet{Gombay}, if there are no restrictions on $\bmu$,  the parameter $\bmu$ and $\mS$ in the 
score vector are  replaced by its maximum likelihood estimator $\hat\bmu$ and $\hat\bS$, where 
\begin{equation}
\hat\bS=\displaystyle\frac{1}{n}\sum\limits_{i=1}^{n}(\bx_i-\hat\bmu)(\bx_i-\hat\bmu)'. \label{Sigmahat}
\end{equation}  
Then  we have $U_1 (\chi, \btea)=\bo$,  so only $I_{22}(\chi, \btea)$  is involved in the calculation of the 
Wald's score test statistic. Therefore, we have

\begin{proposition}
Wald's score test statistic  for testing $H_0 : \bS=\bS_0 $ with no constrains on $\bmu$ has the 
following form 
\begin{equation}
\mb{WST}(\chi, \bS_0)=\frac{n}{2}\mb{tr}[(\mb{\bI}_p-\bS_0\widehat{\bS}^{-1})^2] \label{WST1}
\end{equation}
where $\chi=(\bx_1,\cdots,\bx_n)$ is a sample from $N_p(\bmu,\bS)$, and the test statistic 
$\mb{WST}(\chi, \bS_0)$ tends to a $\chi^2$-distribution  with   $\displaystyle\frac{p(p+1)}{2}$  degrees of freedom   under $H_0$ when $n \rightarrow \infty$.
\label{Prop1}
\end{proposition}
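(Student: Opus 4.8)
The plan is to specialize the general Wald statistic from the Definition to the block structure already established above and then collapse the resulting Kronecker quadratic form into a trace. Since $\bmu$ is unconstrained, its maximum likelihood estimator $\hat\bmu$ forces $U_1(\chi,\btea)=\bo$, so the cross terms drop and only the $\bS$-block survives. Taking $T_{\btea}(\chi)$ with $\bS$-component equal to the MLE $\widehat{\bS}$ of (\ref{Sigmahat}) and $\btea_0$ with $\bS$-component $\bS_0$, the statistic reduces to
\begin{equation}
\mb{WST}(\chi,\bS_0)=\big(\mb{vec}(\widehat{\bS})-\mb{vec}(\bS_0)\big)' I_{22}(\chi,\widehat{\bS})\big(\mb{vec}(\widehat{\bS})-\mb{vec}(\bS_0)\big),\nonumber
\end{equation}
where $I_{22}(\chi,\widehat{\bS})=\frac{n}{2}(\widehat{\bS}^{-1}\otimes\widehat{\bS}^{-1})$ is read off directly from the information computation preceding the Proposition.

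Next I would simplify using the standard vectorization identities $(B\otimes A)\mb{vec}(X)=\mb{vec}(AXB')$ and $\mb{vec}(X)'\mb{vec}(Y)=\mb{tr}(X'Y)$. Setting $A=\widehat{\bS}-\bS_0$, which is symmetric, one obtains $(\widehat{\bS}^{-1}\otimes\widehat{\bS}^{-1})\mb{vec}(A)=\mb{vec}(\widehat{\bS}^{-1}A\widehat{\bS}^{-1})$, whence the quadratic form equals $\frac{n}{2}\mb{tr}(A\widehat{\bS}^{-1}A\widehat{\bS}^{-1})=\frac{n}{2}\mb{tr}[(A\widehat{\bS}^{-1})^2]$. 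Since $A\widehat{\bS}^{-1}=(\widehat{\bS}-\bS_0)\widehat{\bS}^{-1}=\bI_p-\bS_0\widehat{\bS}^{-1}$, this is exactly (\ref{WST1}). This portion is routine matrix algebra once the block reduction is in place.

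For the limiting distribution, I would invoke the classical Wald theorem quoted in the Definition: under $H_0$ the statistic converges to a $\chi^2$ whose degrees of freedom equal the number of functionally independent parameters being tested. The delicate point, and the main obstacle, is that the naive count from $\mb{vec}(\bS)\in\mathbb{R}^{p^2}$ would give $p^2$, whereas $\bS$ is symmetric and carries only $\frac{p(p+1)}{2}$ free entries, so $I_{22}(\chi,\btea)=\frac{n}{2}(\bS^{-1}\otimes\bS^{-1})$ is written in an over-parametrized coordinate system that is rank-deficient once restricted to the symmetric subspace. I would resolve this by passing to the half-vectorization through the duplication matrix $D_p$, writing $\mb{vec}(\bS)=D_p\,\mb{vech}(\bS)$; because $\mb{vec}(\widehat{\bS})-\mb{vec}(\bS_0)=D_p\big(\mb{vech}(\widehat{\bS})-\mb{vech}(\bS_0)\big)$, the numerical value of the quadratic form is unchanged, while the effective information $D_p' I_{22}(\chi,\btea) D_p$ becomes a nonsingular $\frac{p(p+1)}{2}$-dimensional matrix matched to a $\frac{p(p+1)}{2}$-dimensional parameter. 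Applying the Wald limit in this non-redundant coordinate system then yields the $\chi^2$ with $\frac{p(p+1)}{2}$ degrees of freedom, completing the argument.
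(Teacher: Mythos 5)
Your derivation of the closed form coincides with the paper's own proof: both use the fact that $U_1(\chi,\btea)=\bo$ when $\bmu$ is unconstrained so that only the $\bS$-block of the information survives, and both then collapse $\frac{n}{2}\mb{vec}'(\widehat\bS-\bS_0)(\widehat\bS^{-1}\otimes\widehat\bS^{-1})\mb{vec}(\widehat\bS-\bS_0)$ into $\frac{n}{2}\mb{tr}[(\bI_p-\bS_0\widehat\bS^{-1})^2]$ via the identities $(B\otimes A)\mb{vec}(X)=\mb{vec}(AXB')$ and $\mb{vec}'(X)\mb{vec}(Y)=\mb{tr}(X'Y)$. Where you go beyond the paper is the limiting distribution: the paper's proof stops at the algebra and leaves the $\chi^2$ limit with $\frac{p(p+1)}{2}$ degrees of freedom to the classical Wald theorem cited in its Definition, never confronting the fact that $\mb{vec}(\bS)$ is a redundant $p^2$-dimensional coordinate system for a symmetric parameter, in which $I_{22}$ would naively suggest $p^2$ degrees of freedom. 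Your duplication-matrix argument --- that $\mb{vec}(\widehat\bS-\bS_0)=D_p\,\mb{vech}(\widehat\bS-\bS_0)$ makes the quadratic form numerically identical to the Wald statistic in the non-redundant $\mb{vech}$ coordinates, where $D_p'I_{22}D_p$ is a nonsingular matrix of order $\frac{p(p+1)}{2}$ --- is correct and is precisely the justification the stated degrees-of-freedom count needs. So your proof is not a different route but a strict superset of the paper's: same algebraic core, plus an explicit treatment of the symmetry/redundancy issue that the paper glosses over.
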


\begin{proof}
Because the parameter $\bmu$ and $\mS$ in the 
score vector are  replaced by  $\hat\bmu$ and $\hat\bS$, If there is no constrain on $\bmu$. That means $U_1 (\chi, \btea)=\bo$, consequently  $I(\chi, \btea)=: 
\left(
\begin{array}{ccc}
 \bo& \bo   \\
\bo &\displaystyle \frac{n}{2} (\hat\bS^{-1} \otimes \hat\bS^{-1} )
\end{array}
\right)
$.  For a further step,  $T_{\btea}(\chi)=(\hat\bmu', \mb{vec}'(\hat\bS))'$ is   the maximum likelihood estimator of the parameter $\btea$,  so  we have 
\begin{eqnarray*}
\mb{WST}(\chi, \bS_0)
&=&\frac{n}{2}\mb{vec}'(\widehat\bS-\bS_0) 
 (\widehat\bS^{-1} \otimes \widehat\bS^{-1} )
\mb{vec}(\widehat\bS-\bS_0) \\
&=&\frac{n}{2}\mb{vec}'(\widehat\bS-\bS_0) 
\mb{vec}(\widehat\bS^{-1}(\widehat\bS-\bS_0)\widehat\bS^{-1}) \\
&=&\frac{n}{2}\mb{tr}((\widehat\bS-\bS_0) 
\widehat\bS^{-1}(\widehat\bS-\bS_0)\widehat\bS^{-1})\\
&=&\frac{n}{2}\mb{tr}[(\mb{\bI}_p-\bS_0\widehat{\bS}^{-1})^2] 
\end{eqnarray*}
\end{proof}

\begin{corollary}
Wald's score test statistic  for testing $H_0 : \bS=\bI_p $ with no constrains on $\bmu$ has the 
following form 
$$
\mb{WST}(\chi, \bI_p)=\frac{n}{2}\mb{tr}[(\mb{\bI}_p-\widehat{\bS}^{-1})^2]
$$
where $\chi=(\bx_1,\cdots,\bx_n)$ is a sample from $N_p(\bmu,\bS)$, and the test statistic 
$\mb{WST}(\chi, \bI_p)$ tends to a $\chi^2$-distribution  with   $\displaystyle\frac{p(p+1)}{2}$  degrees of freedom under $H_0$ when $n \rightarrow \infty$.
\end{corollary}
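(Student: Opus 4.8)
The plan is to treat this corollary as the special case $\bS_0 = \bI_p$ of Proposition \ref{Prop1}, since the identity matrix is itself a legitimate known target in the hypothesis $H_0: \bS = \bS_0$. First I would substitute $\bS_0 = \bI_p$ directly into the closed form (\ref{WST1}) already established, namely $\mb{WST}(\chi, \bS_0) = \frac{n}{2}\mb{tr}[(\bI_p - \bS_0\widehat{\bS}^{-1})^2]$. The algebraic step is immediate: since $\bI_p \widehat{\bS}^{-1} = \widehat{\bS}^{-1}$, the product $\bS_0 \widehat{\bS}^{-1}$ collapses to $\widehat{\bS}^{-1}$, giving
\[
\mb{WST}(\chi, \bI_p) = \frac{n}{2}\mb{tr}[(\bI_p - \widehat{\bS}^{-1})^2],
\]
which is exactly the claimed form. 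No fresh derivation of the score vector, the Hessian, or the information block is required.

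For the distributional statement, I would invoke the fact that Proposition \ref{Prop1} already supplies a $\chi^2$ limit with $p(p+1)/2$ degrees of freedom for an arbitrary fixed $\bS_0$, and observe that this count records the number of free parameters of the symmetric matrix $\bS$ being tested. That number is $p(p+1)/2$ independently of the particular null value, so the choice $\bS_0 = \bI_p$ leaves the asymptotics unchanged. Indeed, neither the maximum likelihood estimator $\widehat{\bS}$ nor the relevant information block $I_{22}(\chi,\btea) = \frac{n}{2}(\bS^{-1} \otimes \bS^{-1})$ depends on $\bS_0$, so the entire argument of the proposition transfers verbatim once $\bI_p$ is inserted.

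The only point warranting a moment's care — and the closest thing to an obstacle here, though it is very mild — is confirming that specializing $\bS_0$ to the identity does not accidentally reduce the effective parameter dimension or degenerate the information matrix at the null. Because $\bI_p$ is positive definite, $\widehat{\bS}^{-1} \otimes \widehat{\bS}^{-1}$ stays nonsingular and the Wald quadratic form retains full rank $p(p+1)/2$, so the degrees of freedom are precisely as stated. Hence the corollary follows immediately from Proposition \ref{Prop1}.
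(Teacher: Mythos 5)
Your proof is correct and matches the paper's approach: the corollary is indeed just the specialization $\bS_0 = \bI_p$ of Proposition \ref{Prop1}, with the substitution $\bS_0\widehat{\bS}^{-1} = \widehat{\bS}^{-1}$ and the unchanged degrees of freedom $p(p+1)/2$ following immediately (the paper itself states this corollary without further proof for exactly this reason). Your extra remark on the nondegeneracy of the information block at $\bI_p$ is a harmless bonus, not a point of divergence.
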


\begin{corollary}
Wald's score test statistic  for testing $H_0 : \bS=\gamma\bI_p $ with no constrains on $\bmu$ has the 
following form 
$$
\mb{WST}(\chi, \gamma\bI_p)=\frac{n}{2}\mb{tr}[(\mb{\bI}_p-\frac{\mb{tr}(\widehat\bS)}{p}\widehat{\bS}^{-1})^2]
$$
where $\chi=(\bx_1,\cdots,\bx_n)$ is a sample from $N_p(\bmu,\bS)$ and $\gamma>0$ is an unknown parameter. 
Then the test statistic 
$\mb{WST}(\chi, \gamma\bI_p)$ tends to a $\chi^2$-distribution  with   $\displaystyle\frac{p(p+1)}{2}-1$  degrees of freedom under $H_0$ when $n \rightarrow \infty$.
\end{corollary}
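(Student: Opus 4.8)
The plan is to follow exactly the route of Proposition~\ref{Prop1}, the only new ingredient being that $\gamma$ is an unknown scalar which must be replaced by its constrained maximum likelihood estimator before the general formula (\ref{WST1}) is applied. First I would restrict the Gaussian log-likelihood to the null family $\bS=\gamma\bI_p$. Substituting $\bS=\gamma\bI_p$ into $l(\chi,\btea)$ collapses the quadratic form, giving
\[
l=-\frac{np}{2}\mb{ln}(2\pi)-\frac{np}{2}\mb{ln}\gamma-\frac{1}{2\gamma}\sum_{i=1}^{n}\mb{tr}\big((\bx_i-\bmu)(\bx_i-\bmu)'\big).
\]
Differentiating with respect to $\gamma$ and setting the result to zero yields $np\,\gamma=\sum_{i}\mb{tr}((\bx_i-\bmu)(\bx_i-\bmu)')$, so that after replacing $\bmu$ by $\widehat\bmu$ and recalling $\sum_i(\bx_i-\widehat\bmu)(\bx_i-\widehat\bmu)'=n\widehat\bS$ from (\ref{Sigmahat}), the constrained estimator is $\widehat\gamma=\mb{tr}(\widehat\bS)/p$, precisely the trace-average appearing in the statement.

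Next I would plug $\bS_0=\widehat\gamma\bI_p$ into (\ref{WST1}). Since $\bS_0\widehat\bS^{-1}=\widehat\gamma\widehat\bS^{-1}=\frac{\mb{tr}(\widehat\bS)}{p}\widehat\bS^{-1}$, the general Wald form becomes at once
\[
\mb{WST}(\chi,\gamma\bI_p)=\frac{n}{2}\mb{tr}\Big[\big(\bI_p-\tfrac{\mb{tr}(\widehat\bS)}{p}\widehat\bS^{-1}\big)^2\Big],
\]
which is the claimed expression. This substitution is purely mechanical, so I expect no difficulty with the form of the statistic.

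The delicate part, and the main obstacle, is the count of degrees of freedom. In Proposition~\ref{Prop1} the null pins down all $\frac{p(p+1)}{2}$ free entries of the symmetric matrix $\bS$, producing $\chi^2$ with $\frac{p(p+1)}{2}$ degrees. Here the null instead confines $\bS$ to the one-dimensional family $\{\gamma\bI_p:\gamma>0\}$, so the hypothesis imposes only $\frac{p(p+1)}{2}-1$ independent restrictions; equivalently, the $\frac{p(p-1)}{2}$ off-diagonal entries are forced to vanish and the $p$ diagonal entries are forced equal, for a total of $\frac{p(p-1)}{2}+(p-1)=\frac{p(p+1)}{2}-1$ constraints. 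I would then argue that estimating the single nuisance parameter $\gamma$ lowers the effective dimension of the tested parameter by exactly one: invoking the standard large-sample theory of Wald's test (and its asymptotic equivalence with the likelihood-ratio test) for a composite null, the statistic converges under $H_0$ to a $\chi^2$ law with degrees of freedom equal to the number of restrictions, namely $\frac{p(p+1)}{2}-1$. Making this reduction rigorous---by identifying the effective rank of the relevant information matrix restricted to the constrained submanifold, rather than merely counting parameters---is where the genuine work lies.
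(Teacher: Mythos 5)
Your proposal is correct and follows essentially the same route as the paper: the paper's entire proof is the one-line substitution of $\bS_0=\hat\gamma\bI_p$, with $\hat\gamma=\mb{tr}(\widehat\bS)/p$ the maximum likelihood estimator of $\gamma$, into the formula (\ref{WST1}) of Proposition~\ref{Prop1}. Your explicit derivation of the constrained MLE and your restriction-counting justification of the $\frac{p(p+1)}{2}-1$ degrees of freedom are both sound and in fact go beyond the paper, which asserts the limiting distribution without further argument.
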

\begin{proof}
Replace the $\bS_0$ by $\hat\gamma\bI_p$ according to (\ref{WST1}), where $\hat\gamma=\displaystyle\frac{\mb{tr}(\widehat\bS)}{p}$ is the maximum likelihood estimator of $\gamma$.
\end{proof}

\subsection{CLT for LSS of a high-dimensional sample covariance matrix }

Following  the above Proposition~\ref{Prop1}, the statistics of Wald's score test  for the hypothesis (\ref{H1}) can be transformed into the trace of a matrix concerned with the sample covariance matrix, i.e. a function of the eigenvalues of  the sample covariance matrix. It  exactly meets the requirements in the CLT for LSS of large dimensional covariance matrices in \cite{BS04}, so we can modify the classical Wald's score tests by this limiting tool. In order to expand the usable range of the modified Wald's score tests,    a quick survey of an enhanced  version of the CLT for LSS of large dimensional covariance matrices is cited from  \cite{J15}, which excludes the  strict condition on the 4th moment for  a wider usage.  Before quoting, we first introduce some basic concepts and notations.

Suppose  
$(\bxi_1, \cdots,\bxi_{n}) $  to be 
an $\mbox{i.i.d}$ sample from some $p$-dimensional distribution with mean $\bo_p$ and
covariance matrix $\bI_p$, where $\bxi_i = (\xi_{1i}, \xi_{2i},\cdots , \xi_{pi})'$. 
The corresponding sample covariance matrix is
\begin{equation}
\mS_n={1\over{n}}\sum\limits_{i=1}^{n}\bxi_i\bxi_i^*.\label{Sn}
\end{equation}
where $(\cdot)^*$ is  conjugate transpose.
For simplicity, $F^q, F^{q_n}$  denote the Mar\v{c}enko-Pastur law of index
 $q~ \mbox{and}  ~ q_n$ respectively, where $q_n=\frac{p}n \rightarrow q \in [0, +\infty)$.
 $F_n^{\mS_n}$ represents the  ESD (Empirical Spectral Distribution) of the matrix $\mS_n$.
Define the empirical process
$G_n : = \{G_n(f)\}$ indexed by $\mathcal{A}$ ,
\begin{equation}
G_n(f)= p\cdot \int_{-\infty}^{+\infty} f(x)\left[F^{\mS_n}_n-
F^{q_n}\right] (\md x), ~~~~~~\quad f \in  \mathcal{A},\label{Gdef}
\end{equation}
 where $\mathcal{U}$ is an open set of the complex plane
  including  the surppoting  set of $F^q $
and $\mathcal{A}$   be the set of analytic functions $f :
\mathcal{U} \mapsto \mathbb{C}.$  Define
\[\kappa=
\left\{
\begin{array}{cc}
 2, & \text{if the~}  \bxi- \text{variables are real,\quad\quad} \\
  1,& \text{if the~}  \bxi- \text{variables are complex.} 
\end{array}
\right.
\]
Then an enhanced  version of the CLT for LSS of large dimensional covariance matrices from    \citet{J15} (Lemma 2.1) is quoted as following:

\begin{lemma}

Assume: \\
   $ f_1, \cdots ,f_k \in \mathcal{A}$,  $\{\xi_{ij}\}$  are  $i.i.d.$
random variables, such that  $\mb{E}\xi_{11}=0,~ \mb{E}{|\xi_{11}|^2}=\kappa-1,
~\mb{E}{|\xi_{11}|}^4 < \infty$  and the $\{\xi_{ij}\}$ satisfy the condition
\[\frac{1}{np}\sum\limits_{ij} \mb{E}|\xi_{ij}|^4I(|\xi_{ij}|\geq \sqrt{n}\eta) \rightarrow 0\]
for any fixed $\eta>0$.  Moreover,
 $\displaystyle\frac{p}{n}=q_n \rightarrow q
\in [0, +\infty) $
as  $n, p \rightarrow \infty$  and $E(\xi_{11}^4)=\beta+\kappa+1,$ where $\beta$ is a  constant.\\[1mm]
Then the random vector $\left(G_n(f_1), \cdots 
,  G_n(f_k) \right)$  forms a tight sequence by index $n$, and  
it  weakly converges to a $k$-dimensional Gaussian
vector with mean vector
\begin{eqnarray}
\mu(f_j)&=&-\frac{\kappa-1}{2\pi i} \oint f_j(z) \frac{q\underline{m}^3(z)(1+\underline{m}(z))}{[(1-q)\underline{m}^2(z)+2\underline{m}(z)+1]^2} \md z \label{04mean1}\\
&&-\frac{\beta q }{2 \pi i} \oint f_j(z)\frac{\underline{m}^3(z)}{(1+\underline{m}(z))[(1-q)\underline{m}^2(z)+2\underline{m}(z)+1]} \md z,
\label{04mean2}
\end{eqnarray}
and covariance
function
\begin{eqnarray}
\upsilon\left(f_j,
f_\ell\right)&=&-\frac{\kappa}{4\pi^2}\oint\oint\frac{f_j(z_1)f_\ell(z_2)}{(\underline{m}(z_1)-\underline{m}(z_2))^2}
\md\underline{m}(z_1)\md \underline{m}(z_2)  \label{04var1}\\
&&-\frac{\beta q}{4\pi^2}\oint\oint\frac{f_j(z_1)f_\ell(z_2)}{(1+\underline{m}(z_1))^2(1+\underline{m}(z_2))^2}
\md\underline{m}(z_1)\md \underline{m}(z_2), \label{04var2}
\end{eqnarray}
where  $ j,\ell \in \{1, \cdots,
k\}$, and  $\underline{m}(z)\equiv m_{\underline{F}^q}(z)$ is the
Stieltjes Transform of ~ $\underline{F}^q\equiv (1-q)I_{[0,
    \infty)}+qF^q$. The contours all contain the support of $F^q$ and 
   non overlapping  in both (\ref{04var1})  and    (\ref{04var2}).\\
\label{CLT}
\end{lemma}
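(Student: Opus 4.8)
The plan is to follow the Bai--Silverstein strategy for linear spectral statistics, carrying the extra fourth--moment contributions explicitly so as to produce the $\beta$--terms in (\ref{04mean2}) and (\ref{04var2}). Since each $f_j \in \mathcal{A}$ is analytic on a neighbourhood $\mathcal{U}$ of the support of $F^q$, the first step is to rewrite every component as a contour integral of the Stieltjes transform. Writing $m_n(z)$ for the Stieltjes transform of $F^{\mS_n}_n$ and $m_{q_n}(z)$ for that of $F^{q_n}$, Cauchy's formula gives
\[
G_n(f_j) = -\frac{1}{2\pi i}\oint f_j(z)\, M_n(z)\,\md z, \qquad M_n(z):= p\left[m_n(z)-m_{q_n}(z)\right],
\]
where the contour encloses the support of $F^q$ and lies inside $\mathcal{U}$. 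Because integration against the fixed analytic kernels $f_j$ is a continuous linear functional, it suffices to prove that the resolvent process $\{M_n(z)\}$ converges weakly, on such a contour $\mathcal{C}$, to a Gaussian process; the stated limit for $(G_n(f_1),\dots,G_n(f_k))$ then follows by the continuous mapping theorem, and the formulae (\ref{04mean1})--(\ref{04var2}) are obtained by substituting the limiting mean and covariance of $M_n(z)$ into the resulting single and double contour integrals.

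Second, I would split $M_n(z)=M_n^{(1)}(z)+M_n^{(2)}(z)$ into the centred part $M_n^{(1)}(z)=p[m_n(z)-\mb{E}\,m_n(z)]$ and the bias part $M_n^{(2)}(z)=p[\mb{E}\,m_n(z)-m_{q_n}(z)]$. For $M_n^{(1)}(z)$ I would establish finite--dimensional convergence by the martingale CLT. Letting $\mb{E}_j$ denote conditioning on $\bxi_1,\dots,\bxi_j$, the increment $(\mb{E}_j-\mb{E}_{j-1})\,\mb{tr}(\mS_n-z\bI_p)^{-1}$ is a martingale difference sequence, and the Sherman--Morrison rank--one perturbation identity (removing the $j$th column from the resolvent) controls each increment. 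Verifying the Lindeberg condition---here the truncation hypothesis $\frac{1}{np}\sum_{ij}\mb{E}|\xi_{ij}|^4 I(|\xi_{ij}|\ge\sqrt{n}\eta)\to 0$ is exactly what permits a preliminary truncation and renormalisation of the entries---together with computing the limit of the sum of conditional variances yields the Gaussian finite--dimensional limits. Tightness of $M_n^{(1)}(\cdot)$ on $\mathcal{C}$ then follows from a uniform second--moment bound on the increments $M_n^{(1)}(z_1)-M_n^{(1)}(z_2)$ away from the real axis, upgrading finite--dimensional convergence to process--level weak convergence.

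Third, for the deterministic part I would expand $\mb{E}\,m_n(z)$ around the Mar\v{c}enko--Pastur equation satisfied by $\underline{m}(z)$, the $1/p$ correction integrating against $f_j$ to give the mean $\mu(f_j)$. The key point, and the main obstacle, is the careful fourth--moment accounting. In the quadratic forms $\bxi_j^*(\mS_n-z\bI_p)^{-1}\bxi_j$ that drive both the conditional variances (for the covariance) and the bias (for the mean), the expansion produces, besides the universal terms depending only on the second moment of the entries and yielding (\ref{04mean1}) and (\ref{04var1}), an additional term proportional to the fourth--moment parameter $\beta=\mb{E}\xi_{11}^4-\kappa-1$, arising from the variance of the diagonal contribution $\sum_i(|\xi_{ij}|^2-1)$. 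Isolating this term, showing that the lower--order moments and the off--diagonal pieces do not contribute in the limit, and verifying that it integrates to precisely the second summands in (\ref{04mean2}) and (\ref{04var2}) is the delicate part; this is where relaxing the usual Gaussian--matching fourth--moment assumption is paid for, and where most of the bookkeeping lies. Once these contributions are identified and the integrals are evaluated via the change of variables $z\mapsto\underline{m}(z)$, the contour integrals (\ref{04mean1})--(\ref{04var2}) emerge and the proof concludes.
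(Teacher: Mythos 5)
You should first be aware that the paper contains no proof of this statement at all: Lemma~2.1 is explicitly \emph{quoted} from Jiang (2015) (which in turn builds on Bai and Silverstein (2004)), so there is nothing internal to compare your argument against. Judged on its own terms, your outline does follow the same strategy as that cited literature: the Cauchy-integral reduction of $G_n(f_j)$ to the resolvent process $M_n(z)=p[m_n(z)-m_{q_n}(z)]$, the split into the centred martingale part and the deterministic bias part, the martingale CLT with Lindeberg verification enabled by the truncation condition, and the isolation of the extra fourth-moment (the $\beta$) contributions coming from the variance of the diagonal terms $\sum_i(|\xi_{ij}|^2-1)$ in the quadratic forms. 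This is the correct architecture for removing the Gaussian-matching fourth-moment hypothesis of the original Bai--Silverstein theorem.

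That said, as a proof your text is a roadmap rather than an argument, and two of the deferred points are genuinely load-bearing. First, the contour representation $G_n(f_j)=-\frac{1}{2\pi i}\oint f_j(z)M_n(z)\,\md z$ is only valid on the event that every eigenvalue of $\mS_n$ lies inside the contour; making this rigorous requires the exact-separation/no-eigenvalues-outside-the-support results (Bai and Silverstein, 1998) and a treatment of the exceptional event, plus a truncation of the contour near the real axis where the resolvent bounds degenerate --- none of which you address. Second, the entire content of the lemma beyond Bai--Silverstein (2004) is the precise form of the $\beta$-terms in (\ref{04mean2}) and (\ref{04var2}); you correctly identify where they originate but explicitly leave the computation ("the delicate part... where most of the bookkeeping lies") undone, so the formulas that constitute the statement are never actually derived. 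In short: right approach, consistent with the source the paper cites, but the proposal would need the full fourth-moment bookkeeping and the spectrum-confinement justification carried out before it could be called a proof.
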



 \section{ The Proposed Testing Statistics } \label{New}

 In this section,   $\chi=(\bx_1,\cdots,\bx_n)$ is set as  an independent and identically distributed  sample from 
a $p$ dimensional random vector $\bX$ with mean $\bmu$ and covariance matrix $\bS$. To test on  the structure of  covariance  matrix, we consider the hypothesis  
\[
H_0 : \bS=\bS_0  \quad \mb{v.s.} \quad  H_1: \bS \neq \bS_0, 
\]
which covers  the identity hypothesis test  $H_0 : \bS=\bI_p $ and  sphericity  hypothesis test  $H_0 : \bS=\gamma\bI_p $ as  the special cases. 

 It has been  well studied under the normal distribution assumption with the classical setting of fixed $p$, such as  Anderson (2003), Nagao (1973) and John (1971) etc.  Also,   Wald's score test was given in   Wald(1943). They all  lost their effectiveness  as the dimension $p$ was much higher  and performed  even worse for the non-Gaussian variables.  So we 
 hope to correct the Wald's score test for the hypothesis (\ref{H1}) by using Lemma~\ref{CLT}, which makes the correction  applicable for large dimensional data and non-Gaussian assumption.

  Set 
$\widetilde\bxi_i=\bS_0^{-\frac{1}{2}}(\textbf{x}_i-\bmu),$
then the array $\{\widetilde\bxi_i\}_{i=1, \cdots, n}$ contains  $p$-dimensional
standardized variables under $H_0$.  If the parameter $\bmu$ is unknown,  the sample mean is used instead. Then the Lemma \ref{CLT} should be applied by $n-1$  instead of $n$  by \cite{ZhengBaiYao}. Therefore, we define $ \widetilde\bS=\displaystyle\frac{n}{n-1}\widehat\bS\bS_0^{-1}$, then $\widetilde\bS$ has  the same LSD with 
$\mS_{n-1}$ defined in (\ref{Sn}) with $n$ substituted by $n-1$.  
Let 
\begin{equation}
\widetilde{\mb{WST}}(\chi, \bS_0)=\frac{n}{2}\mb{tr}[(\mb{\bI}_p-\widetilde\bS^{-1})^2], \label{WSTt}
\end{equation}
  it is also natural  to apply the  Lemma \ref{CLT} with  $n-1$  instead of $n$ to the modified  Wald's score test statistic $\displaystyle \frac{2}{n}\widetilde{\mb{WST}}(\chi, \bS_0)$. 
Thus,  the theorem of large dimensional  Wald's score test is proposed  as below:  
\begin{theorem}
Suppose that the conditions of Lemma \ref{CLT} hold, for hypothesis test $H_0 :  \bS  =\bS_0$,  $\widetilde{\mb{WST}}(\chi, \bS_0)$ is defined as (\ref{WSTt}),  set $p/(n-1)=q_n \rightarrow q \in [0, 1)$, 
and  $f(x)= (1-\displaystyle \frac{1}{x})^2$.  Then,  under $H_0$ and when $n\rightarrow\infty$, the correction to Wald's score test statistics is 
\begin{equation}
  CWST(\chi, \bS_0)=\upsilon(f)^{-\frac{1}{2}}\left[ \frac{2}{n}\widetilde{\mb{WST}}(\chi, \bS_0)-p \cdot
    F^{q_n}(f)- \mu(f)\right] \Rightarrow N \left( 0,
  1\right),
  \label{CWST1}
\end{equation}
where $F^{q_n}$ is the Mar\v{c}enko-Pastur law of  index  $ q_n$,  and $ F^{q_n}(f),\mu(f)$ and $\upsilon(f)$ are calculated in (\ref{limitWST}), (\ref{meanWST}) and (\ref{varWST}),  respectively.
  \label{CWSTth}
  \end{theorem}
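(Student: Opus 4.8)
The plan is to recast the modified statistic as a single linear spectral statistic of $\widetilde\bS$ and then apply Lemma~\ref{CLT} directly. Writing $\lambda_1,\dots,\lambda_p$ for the eigenvalues of $\widetilde\bS=\frac{n}{n-1}\widehat\bS\bS_0^{-1}$, I would first note that
\[
\frac{2}{n}\widetilde{\mb{WST}}(\chi,\bS_0)=\mb{tr}\big[(\bI_p-\widetilde\bS^{-1})^2\big]=\sum\limits_{j=1}^p\Big(1-\frac{1}{\lambda_j}\Big)^2=\sum\limits_{j=1}^p f(\lambda_j)=p\int f(x)\,F_n^{\widetilde\bS}(\md x),
\]
with $f(x)=(1-1/x)^2$. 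Under $H_0$ the standardized vectors $\widetilde\bxi_i=\bS_0^{-1/2}(\bx_i-\bmu)$ are $i.i.d.$ with covariance $\bI_p$, and since the sample mean replaces $\bmu$, the matrix $\widetilde\bS$ has the same LSD as $\mS_{n-1}$ in (\ref{Sn}); this is exactly what licenses the use of Lemma~\ref{CLT} with $n$ replaced by $n-1$, so that the governing ratio is $q_n=p/(n-1)$.

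Next I would verify the analyticity requirement of Lemma~\ref{CLT}. The support of $F^q$ is $[(1-\sqrt q)^2,(1+\sqrt q)^2]$, which for $q\in[0,1)$ is bounded away from the origin, so the pole of $f$ at $x=0$ falls outside a neighbourhood $\mathcal U$ of the support and $f\in\mathcal A$; this is precisely why the regime $q\in[0,1)$ is imposed. Decomposing $p\int f\,\md F_n^{\widetilde\bS}=p\cdot F^{q_n}(f)+G_n(f)$ with $G_n$ as in (\ref{Gdef}), Lemma~\ref{CLT} yields $G_n(f)\Rightarrow N(\mu(f),\upsilon(f))$. Hence $\frac{2}{n}\widetilde{\mb{WST}}(\chi,\bS_0)-p\,F^{q_n}(f)-\mu(f)$ converges to $N(0,\upsilon(f))$, and normalizing by $\upsilon(f)^{-1/2}$ gives the claimed standard normal limit. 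This reduces the theorem to the explicit evaluation of the three scalars $F^{q_n}(f)$, $\mu(f)$ and $\upsilon(f)$.

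For the centering constant I would compute $F^{q_n}(f)=\int(1-2/x+1/x^2)\,\md F^{q_n}$ from the negative moments of the Mar\v{c}enko--Pastur law, $\int x^{-1}\md F^{q_n}=(1-q_n)^{-1}$ and $\int x^{-2}\md F^{q_n}=(1-q_n)^{-3}$, producing the closed form recorded in (\ref{limitWST}). For $\mu(f)$ and $\upsilon(f)$ I would convert the $z$-contour integrals of (\ref{04mean1})--(\ref{04var2}) into $\underline m$-contour integrals through the parametrization $z=-\underline m^{-1}+q(1+\underline m)^{-1}$, under which $\md z=[\underline m^{-2}-q(1+\underline m)^{-2}]\,\md\underline m$ and the recurring factor obeys $(1-q)\underline m^2+2\underline m+1=\underline m^2(1+\underline m)^2\,\md z/\md\underline m$; substituting $f$ expressed through $\underline m$ and evaluating residues at $\underline m=0$ and $\underline m=-1$ delivers (\ref{meanWST}) and (\ref{varWST}).

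I expect the main obstacle to be this explicit contour-integral evaluation. After the change of variables, the terms $1/z$ and $1/z^2$ coming from $f$ generate extra singularities in the $\underline m$-plane whose positions must be located and checked against the contour, so keeping track of which poles are enclosed, and with what orientation, is delicate. The variance integral (\ref{04var1})--(\ref{04var2}) is the most laborious part, being a double integral over two non-overlapping contours that must be reduced by first taking the inner residue; throughout, the $\kappa$-part and the fourth-moment $\beta q$-part have to be carried separately to arrive at the correct final expressions.
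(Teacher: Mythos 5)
Your reduction step is exactly the paper's proof: rewriting $\frac{2}{n}\widetilde{\mb{WST}}(\chi,\bS_0)$ as $p\int f\,\md F_n^{\widetilde\bS}$, invoking the substitution principle so that Lemma \ref{CLT} applies with $n-1$ in place of $n$, splitting off $p\,F^{q_n}(f)$ so the remainder is $G_n(f)$ of (\ref{Gdef}), and then normalizing. Your observation that $q\in[0,1)$ keeps the pole of $f$ at $x=0$ away from the support (so $f\in\mathcal{A}$) is the same reason the paper gives for restricting to $q_n<1$, and your negative-moment shortcut $\int x^{-1}\md F^{q_n}=(1-q_n)^{-1}$, $\int x^{-2}\md F^{q_n}=(1-q_n)^{-3}$ reproduces (\ref{limitWST}) more quickly than the paper's trigonometric substitution.

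The concrete error is in your residue prescription for $\mu(f)$ and $\upsilon(f)$. Under the parametrization (\ref{zzmm}) one has $z=\frac{(q-1)\underline{m}-1}{\underline{m}(1+\underline{m})}$, hence $\frac{1}{z}=\frac{\underline{m}(1+\underline{m})}{(q-1)\underline{m}-1}$: the singularities that $f$ contributes in the $\underline{m}$-plane sit at $\underline{m}=\frac{1}{q-1}=-\frac{1}{1-q}$, not at $\underline{m}=0$. The $\underline{m}$-contour encloses $\left[-\frac{1}{1-\sqrt{q}},\,-\frac{1}{1+\sqrt{q}}\right]$, which contains both $-1$ and $\frac{1}{q-1}$, whereas $\underline{m}=0$ is the image of $z=\infty$ and lies \emph{outside} the contour; moreover the integrands have no pole at the origin anyway, since the $\underline{m}^{-2}$ factor in $\md z$ is cancelled by the $\underline{m}^{3}$ factors in (\ref{04mean1})--(\ref{04mean2}). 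So the residues must be taken at $\underline{m}=-1$ and $\underline{m}=\frac{1}{q-1}$ (as in the paper's appendix); carried out literally, your prescription would miss the $\frac{1}{q-1}$ contributions entirely and yield constants different from (\ref{meanWST}) and (\ref{varWST}), breaking the asserted $N(0,1)$ limit. One further caution: if you insist on evaluating the $\kappa$-part (\ref{04mean1}) by residues as well, the factor $[(1-q)\underline{m}^2+2\underline{m}+1]^2$ is only partially cancelled by $\md z$, and its zeros are precisely the endpoints $-\frac{1}{1\pm\sqrt{q}}$ of the enclosed interval, so additional singular contributions must be handled there; the paper sidesteps this by computing $\mu_1(f)$ from the real-integral formula (9.12.13) of Bai and Silverstein (2010) rather than by contour integration.
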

\begin{proof}
By  the  derivation  (\ref{WSTt}), we have 
\begin{eqnarray*}
 \frac{2}{n}\widetilde{\mb{WST}}(\chi, \bS_0)&=&\mb{tr}[(\mb{\bI}_p-\widetilde\bS^{-1})^2] \\
  &=& \sum\limits_{i=1}^{p} \left(1-\frac{1}{\lambda_i^{\widetilde\bS}}\right)^2= p \cdot \int (1-\frac1{x})^2
  \md F^{\widetilde\bS}_n(x)\\
  &=&p \cdot \int f(x) \md\left(F^{\widetilde\bS}_n(x)-F^{q_n}(x)\right) +p \cdot
  F^{q_n}(f),
\end{eqnarray*}
where  $(\lambda_i^{\widetilde\bS}),  i=1,\cdots, p$ and $F_n^{\widetilde\bS}$   are the eigenvalues and the   ESD of the matrix $\widetilde\bS$, respectively.
   $F^{q_n} (f)$ denotes the integral of the function $f(x) $ by the density corresponding to the Mar\v{c}enko-Pastur law of index $q_n$, that is 
\begin{equation}
F^{q_n}(f)=\int_{-\infty}^{\infty}f(x) \md F^{q_n}(x) = 1-\displaystyle\frac{2}{(1-q_n)}+\displaystyle\frac{1}{(1-q_n)^3}, \quad \text{if }  \quad  0 \leq q_n<1,\label{limitWST}
\end{equation}
which is calculated
in the  \ref{App}.
%

As the definition in (\ref{Gdef}), we have 
\begin{equation}
 G_n(f)= p \cdot \int f(x) d\left(F^{\widetilde\bS}_n(x)-F^{q_n}(x)\right)=\frac{2}{n}\widetilde{\mb{WST}}(\chi, \bS_0)-p \cdot F^{q_n}(f).\label{WSTESD-pLSD}
\end{equation}
By Lemma \ref{CLT}, $G_n(f)$  weakly converges  to a Gaussian vector  with the mean
\begin{equation}
\mu(f)=\frac{-(\kappa-1)q(2q^2-5q-1)}{(1-q)^4}+\frac{\beta q(2q^2-3q-1)}{(q-1)^3} \label{meanWST}
\end{equation}
and variance
\begin{equation}
\upsilon(f)=\frac{2\kappa q^2(2q^3-12q^2+18q+1)}{(q-1)^8}+\frac{4 \beta q^3(2-q)^2}{(q-1)^6}.\label{varWST}
\end{equation}
which are calculated in  the \ref{App}.    Then, by Lemma~\ref{CLT} and  (\ref{WSTESD-pLSD}),  we arrive at
\[
\frac{2}{n}\widetilde{\mb{WST}}(\chi, \bS_0)-p\cdot F^{q_n}(f)~\Rightarrow~ N\left(\mu(f),
\upsilon(f)\right),\label{WSTconnec}
\]
  Finally, we get
\begin{eqnarray}
 CWST(\chi, \bS_0)=\upsilon(f)^{-\frac{1}{2}}\left[ \frac{2}{n}\widetilde{\mb{WST}}(\chi, \bS_0)-p \cdot
    F^{q_n}(f)- \mu(f)\right] \Rightarrow N \left( 0,
  1\right)\nonumber
\end{eqnarray}
\end{proof}

For  the identity and  sphericity  hypothesis tests,  the corollaries are as below  :
\begin{corollary}
For testing $H_0 : \bS=\bI_p $ with no constrains on $\bmu$,  the conclusion of Theorem \ref{CWSTth} still holds,
only with the test statistic  $\widetilde{\mb{WST}}(\chi, \bI_p)$ in (\ref{CWST1}) is revised by
\begin{eqnarray}
\widetilde{\mb{WST}}(\chi, \bI_p)=\frac{n}{2}\mb{tr}[(\mb{\bI}_p-(\frac{n}{n-1}\widehat{\bS})^{-1})^2] \label{CWST2}.
\end{eqnarray}
\end{corollary}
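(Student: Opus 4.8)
The plan is to obtain this corollary as a direct specialization of Theorem~\ref{CWSTth} to the case $\bS_0=\bI_p$, since the identity hypothesis $H_0:\bS=\bI_p$ is precisely the general hypothesis $H_0:\bS=\bS_0$ with the particular choice $\bS_0=\bI_p$. First I would substitute $\bS_0=\bI_p$ into the standardizing transformation used in the proof of the theorem. Because $\bS_0^{-1/2}=\bI_p^{-1/2}=\bI_p$, the standardized array degenerates to $\widetilde\bxi_i=\bx_i-\bmu$, which under $H_0$ is a $p$-dimensional array with mean $\bo_p$ and covariance $\bS=\bI_p$; since the theorem already assumes the conditions of Lemma~\ref{CLT}, the required moment and Lindeberg-type truncation conditions are inherited by this array with no extra work.

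Next I would compute the matrix $\widetilde\bS$ in this case. From its definition $\widetilde\bS=\frac{n}{n-1}\widehat\bS\,\bS_0^{-1}$, setting $\bS_0=\bI_p$ gives $\widetilde\bS=\frac{n}{n-1}\widehat\bS$, and the statement that $\widetilde\bS$ shares the limiting spectral distribution of $\mS_{n-1}$ (with the $n\to n-1$ adjustment absorbing the unknown-mean correction of \cite{ZhengBaiYao}) carries over verbatim. Substituting this into (\ref{WSTt}) yields
\[
\widetilde{\mb{WST}}(\chi,\bI_p)=\frac{n}{2}\mb{tr}\left[\left(\bI_p-\left(\frac{n}{n-1}\widehat\bS\right)^{-1}\right)^2\right],
\]
which is exactly the revised statistic (\ref{CWST2}) claimed by the corollary.

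Finally, with $\widetilde{\mb{WST}}(\chi,\bI_p)$ in place of $\widetilde{\mb{WST}}(\chi,\bS_0)$, every step of the proof of Theorem~\ref{CWSTth} applies unchanged: the decomposition of $\frac{2}{n}\widetilde{\mb{WST}}$ into the centered linear spectral statistic $G_n(f)$ plus $p\cdot F^{q_n}(f)$ is identical, the test function $f(x)=(1-1/x)^2$ is the same, and the explicit limits $F^{q_n}(f)$, $\mu(f)$ and $\upsilon(f)$ in (\ref{limitWST}), (\ref{meanWST}) and (\ref{varWST}) are unaltered, since they depend only on $f$ and $q$ and not on the particular $\bS_0$. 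Hence $CWST(\chi,\bI_p)\Rightarrow N(0,1)$ by the same invocation of Lemma~\ref{CLT}. There is no genuine obstacle here: the only point deserving a line of verification is that the whitening map reduces to the identity when $\bS_0=\bI_p$, so that no nontrivial transformation of the data is needed and the standardized array still meets the hypotheses of Lemma~\ref{CLT}.
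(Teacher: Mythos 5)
Your proposal is correct and matches the paper's treatment: the paper states this corollary without a separate proof, treating it exactly as you do --- a direct specialization of Theorem~\ref{CWSTth} with $\bS_0=\bI_p$, under which the whitening map is the identity, $\widetilde\bS=\frac{n}{n-1}\widehat\bS$, and the statistic (\ref{WSTt}) reduces to (\ref{CWST2}) while $F^{q_n}(f)$, $\mu(f)$ and $\upsilon(f)$ are unchanged since they depend only on $f$ and $q$.
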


\begin{corollary}
For testing $H_0 : \bS=\gamma\bI_p $ with no constrains on $\bmu$,   the conclusion of Theorem \ref{CWSTth} still holds,
only with the test statistic  $\widetilde{\mb{WST}}(\chi, \bI_p)$ in (\ref{CWST1}) is revised by

$$
\widetilde{\mb{WST}}(\chi, \gamma\bI_p)=\frac{n}{2}\mb{tr}[(\mb{\bI}_p-\hat\gamma(\frac{n}{n-1}\widehat{\bS})^{-1})^2]
$$
where  $\hat\gamma=\displaystyle\frac{\mb{tr}(\frac{n}{n-1}\widehat\bS)}{p}$ is the maximum likelihood estimator of $\gamma$.
\end{corollary}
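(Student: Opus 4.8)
The plan is to reduce the revised statistic to linear spectral statistics of the same matrix $\widetilde\bS$ used in Theorem~\ref{CWSTth}, while carefully tracking the effect of the plug-in $\hat\gamma$. Under $H_0:\bS=\gamma\bI_p$ with $\bmu$ estimated by $\hat\bmu$, set as before $\widetilde\bS=\frac{n}{n-1}\widehat\bS\bS_0^{-1}=\frac{1}{\gamma}\frac{n}{n-1}\widehat\bS$, which by \cite{ZhengBaiYao} shares the LSD and CLT behaviour of $\mS_{n-1}$; write $\lambda_1,\dots,\lambda_p$ for its eigenvalues and $F_n$ for its ESD. The first step is the observation that makes the test usable: since $\frac{n}{n-1}\widehat\bS=\gamma\widetilde\bS$, the estimator obeys $\hat\gamma/\gamma=\frac1p\mb{tr}(\widetilde\bS)=\frac1p\sum_i\lambda_i=:\bar\lambda$, so $\hat\gamma(\frac{n}{n-1}\widehat\bS)^{-1}=\bar\lambda\,\widetilde\bS^{-1}$ and the unknown scale cancels, giving
\[
\frac{2}{n}\widetilde{\mb{WST}}(\chi,\gamma\bI_p)=\mb{tr}\big[(\bI_p-\bar\lambda\,\widetilde\bS^{-1})^2\big]=\sum_{i=1}^p\Big(1-\frac{\bar\lambda}{\lambda_i}\Big)^2.
\]
Thus the statistic is pivotal in $\gamma$ and is a functional of the eigenvalues of the MP-type matrix $\widetilde\bS$, to which Lemma~\ref{CLT} (with $n-1$) applies.

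Second, unlike the known-$\bS_0$ case this is no longer a single LSS of $f(x)=(1-1/x)^2$. Expanding, $\sum_i(1-\bar\lambda/\lambda_i)^2=p\,h(\bar\lambda,b,c)$ with $h(a,b,c)=1-2ab+a^2c$, where $\bar\lambda=\int x\,\md F_n$, $b=\int x^{-1}\md F_n$, $c=\int x^{-2}\md F_n$ are the LSS of $g_0(x)=x$, $g_1(x)=1/x$, $g_2(x)=1/x^2$, all analytic on a neighbourhood of the support of $F^q$ (bounded away from $0$ for $q<1$). I would write each as its Mar\v{c}enko--Pastur value plus a fluctuation, $\bar\lambda=1+p^{-1}G_n(g_0)$, $b=(1-q_n)^{-1}+p^{-1}G_n(g_1)$, $c=(1-q_n)^{-3}+p^{-1}G_n(g_2)$, and Taylor-expand $h$ about $(1,(1-q_n)^{-1},(1-q_n)^{-3})$. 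Its value there is $1-\frac{2}{1-q_n}+\frac{1}{(1-q_n)^3}=F^{q_n}(f)$, which reproduces the centering~(\ref{limitWST}).

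The crux --- and the reason the plug-in cannot be dropped --- is that although $\bar\lambda-1=O_P(1/p)$, it multiplies the $O(p)$ sums $\sum_i\lambda_i^{-1}$ and $\sum_i\lambda_i^{-2}$, so it perturbs the fluctuation at order $O_P(1)$. The delta-method expansion gives
\[
\frac{2}{n}\widetilde{\mb{WST}}(\chi,\gamma\bI_p)-p\,F^{q_n}(f)=\partial_a h\,G_n(g_0)+\partial_b h\,G_n(g_1)+\partial_c h\,G_n(g_2)+R_n,
\]
with the partials evaluated at the MP point and $R_n=O_P(1/p)$, since each retained second-order term is $p^{-1}$ times a product of two $O_P(1)$ linear statistics. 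Because $G_n(f)=G_n(g_2)-2G_n(g_1)$ (the constant part of $f$ contributes nothing) and $\partial_b h=-2$, $\partial_c h=1$ at the MP point, the last two terms reassemble into $G_n(f)$; the surviving plug-in correction is $\partial_a h\,G_n(g_0)=\omega\,G_n(g_0)$ with $\omega:=\frac{2}{(1-q)^3}-\frac{2}{1-q}$. Hence the sphericity fluctuation equals the Theorem~\ref{CWSTth} fluctuation $G_n(f)$ plus an extra multiple of the trace statistic $G_n(g_0)$ induced by estimating $\gamma$.

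Finally I would invoke the joint tightness and Gaussianity of $\big(G_n(g_0),G_n(f)\big)$ from Lemma~\ref{CLT}: any fixed linear combination is asymptotically Gaussian, so the left-hand side converges to $N(\mu,\upsilon)$ with $\mu=\mu(f)+\omega\,\mu(g_0)$ and $\upsilon=\upsilon(f)+\omega^2\upsilon(g_0)+2\omega\,\upsilon(f,g_0)$, each summand an explicit contour integral from (\ref{04mean1})--(\ref{04var2}). Standardizing by these $\mu$ and $\upsilon$ (which replace (\ref{meanWST}) and (\ref{varWST})) yields the claimed $N(0,1)$ limit. The remaining work is routine: evaluating the extra integrals $\mu(g_0)$, $\upsilon(g_0)$, $\upsilon(f,g_0)$ for $g_0(x)=x$ by residues in $\underline m$. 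The main obstacle is exactly this plug-in bookkeeping --- propagating the $O_P(1/p)$ deviation of $\hat\gamma$ through the expansion and retaining the cross-covariance $\upsilon(f,g_0)$ --- since dropping it would corrupt the variance and invalidate the test; the classical one-degree-of-freedom loss (the $-1$ in $\frac{p(p+1)}2-1$) is precisely this $\omega\,G_n(g_0)$ correction in the large-dimensional regime.
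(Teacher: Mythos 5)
You should first note that the paper never actually proves this corollary: it is asserted as an immediate consequence of Theorem~\ref{CWSTth}, the implicit argument being the same one-line substitution (``replace $\bS_0$ by $\hat\gamma\bI_p$'') used after Proposition~\ref{Prop1} in the fixed-dimension setting. Your proposal takes a genuinely different route, and it is the right one: you make the statistic pivotal in $\gamma$, write it as $p\,h(\bar\lambda,b,c)$ in the three linear spectral statistics of $g_0(x)=x$, $g_1(x)=1/x$, $g_2(x)=1/x^2$, and propagate the fluctuation of $\hat\gamma/\gamma=\bar\lambda$ by a delta method. Your central observation is exactly correct: although $\bar\lambda-1=O_P(1/p)$, it multiplies traces of size $O(p)$, so estimating $\gamma$ contributes the term $\omega\,G_n(g_0)$ with $\omega=\frac{2}{(1-q)^3}-\frac{2}{1-q}=\frac{2q(2-q)}{(1-q)^3}>0$ at the same $O_P(1)$ order as $G_n(f)$, and the joint CLT of Lemma~\ref{CLT} applied to the pair $(f,g_0)$ then gives the limit you state.

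The consequence, which you draw and which deserves emphasis, is that the corollary as printed is incorrect rather than proved: the conclusion of Theorem~\ref{CWSTth} does not carry over with the constants (\ref{meanWST}) and (\ref{varWST}) unchanged. Completing the residue calculations you deferred (same contours as in the paper's appendix) gives $\mu(g_0)=0$ (indeed $\mb E\,G_n(g_0)=0$ exactly, since $\mb E\,\mb{tr}(\mS_{n-1})=p$), $\upsilon(g_0)=(\kappa+\beta)q$ and $\upsilon(f,g_0)=\frac{2(\kappa+\beta)q^2(2-q)}{(q-1)^3}$, so the centering and the asymptotic mean are untouched, but the variance acquires the nonzero correction
\begin{equation*}
\omega^2\upsilon(g_0)+2\omega\,\upsilon(f,g_0)
=-\,\frac{4(\kappa+\beta)\,q^3(2-q)^2}{(1-q)^6},
\end{equation*}
so the correct normalizer is $\frac{2\kappa q^2(2q^3-12q^2+18q+1)}{(1-q)^8}-\frac{4\kappa q^3(2-q)^2}{(1-q)^6}$; note the $\beta$-terms cancel, so unlike (\ref{varWST}) the sphericity limit is free of the fourth moment. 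Standardizing by $\upsilon(f)$ as the corollary instructs therefore yields a test with wrong asymptotic size for every $q>0$ (a defect the paper's simulations could not detect, since they only exercise the identity test). Two smaller points about your write-up: you should record $\mu(g_0)=0$, since otherwise it is not visible that only the variance, not the centering, changes; and the closing analogy with the classical one-degree-of-freedom loss is only heuristic --- the classical $-1$ perturbs $\frac{2}{n}\widetilde{\mb{WST}}$ at order $O(1/n)$, whereas $\omega\,G_n(g_0)$ is $O_P(1)$, so these are effects of different orders, even if both stem from projecting out the trace direction.
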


 
\section{Simulation Study}\label{Sim}

 Without loss of the generality, the  identity hypothesis test  $H_0 : \bS=\bI_p $  is investigated in this section. The simulations  are  conducted to compare our proposed corrections to Wald's score test (CWST)  with other large dimensional tests on covariance matrices, like the tests in \cite{LedoitWolf} (LWT),    \cite{Cai}  (CMT) and classical  covariance tests in  \cite{Nagao} (NHT) and \cite{Wald} (WST).  We generate $\mb{i.i.d}$ random samples $\chi=(\bx_1,\cdots,\bx_n)$ from two scenarios of  the  $p$-dimensional populations   under the null hypothesis:
 \begin{description}
\item[Normal Assumption:] Following a $p$-dimensional  normal  distribution with mean $\mu_0\bone_p$ and covariance matrix $\bI_p$, where $\mu_0=2$ and $\bone_p$ denotes a vector with that all elements  are 1.  
\item  [Gamma Assumption:]   Following a $p$-dimensional  Gamma  distribution with all the components are $\mb{i.i.d.}$  from the distribution of  Gamma (4,0.5).
\end{description}

For each  scenario, The empirical sizes and powers are reported  with 10,000 replications at $\alpha=0.05$ significance  level. 
We chose the cases from $n=300,  p=80,120,160,200$ and $n=500,  p=80,160,240,320$   and the mean parameter is supposed to be unknown and substituted by the sample mean during the calculations.

For the  alternative hypothesis,  the population covariance matrix is  designed as the tridiagonal matrix $\bS=(\sigma_{ij})$ for different population assumptions, where for $\rho \in (0,1)$  and
\begin{equation}
\sigma_{ij}=\left\{
\begin{array}{ll}
 1, &  i=j     \\
  \rho, &   |i-j|=1   \\
  0, &    |i-j|>1   
\end{array}
\right.\label{tridiag}
\end{equation}

 Simulation results of empirical sizes  and powers  of four tests  are listed in the Table  \ref{tab:1}, which  includes  our proposed  CWST,  the test in \cite{LedoitWolf} (LWT) and \cite{Nagao} (NHT) and Wald's score test (WST). The comparison of  empirical sizes between our proposed  CWST and   the test in \cite{Cai} (CMT)  is also presented in Table~\ref{tab:2}.

%
  
 \begin{table}[htdp]
\caption{Empirical sizes and powers of the comparative tests for  $H_0 : \bS=\bI_p$  at $\alpha=0.05$ significance level  for Normal and Gamma Assumptions with 10,000  replications.  The alternative hypothesis is    the tridiagonal matrix  $\bS=(\sigma_{ij})$  with $\sigma_{ij}$ defined in (\ref{tridiag}).  \label{tab:1}}
\begin{center}
\begin{tabularx}{13.5cm}{X|XXXXXX}
\hline
& \multicolumn{6}{c}{ $\tiny{(n,p)=(300,80)}$ }  \\ \hline
&Sizes &\multicolumn{2}{c}{ Powers } &Sizes &\multicolumn{2}{c}{ Powers } \\
&$\rho=0$& $\rho=0.05$& $\rho=0.15$& $\rho=0$& $\rho=0.05$& $\rho=0.15$\\
&\multicolumn{3}{c}{ Normal }  &\multicolumn{3}{c}{ Gamma}\\
CWST  & 0.0649   & 0.1303    & 0.9861    &0.0640     & 0.1273    & 0.9748  \\
LWT     & 0.0704    & 0.2199   &   1           &0.2290      &0.4698     &  1 \\
NHT     &  0.1191   &  0.2632   &   0.9998  &0.2934     &0.4701     & 0.9998  \\
WST    &  1            &  1            &   1           & 1             & 1             & 1  \\\hline
& \multicolumn{6}{c}{ $\tiny{(n,p)=(300,160)}$ }  \\ \hline
&Sizes &\multicolumn{2}{c}{ Powers } &Sizes &\multicolumn{2}{c}{ Powers } \\
&$\rho=0$& $\rho=0.05$& $\rho=0.18$& $\rho=0$& $\rho=0.05$& $\rho=0.18$\\
&\multicolumn{3}{c}{ Normal }  &\multicolumn{3}{c}{ Gamma}\\
CWST & 0.0682     &0.1071     &0.9785     & 0.0688    &0.1033     & 0.9746  \\
LWT &0.0874         &0.2619     & 1             & 0.2607    & 0.5358    &   1\\
NHT & 0.1779        &0.3213    &  1           &  0.3486   & 0.5182    &  1 \\
WST& 1                 &1             &  1             & 1             &   1          & 1  \\\hline
& \multicolumn{6}{c}{ $\tiny{(n,p)=(500,160)}$ }  \\ \hline
&Sizes &\multicolumn{2}{c}{ Powers } &Sizes &\multicolumn{2}{c}{ Powers } \\
&$\rho=0$& $\rho=0.05$& $\rho=0.12$& $\rho=0$& $\rho=0.05$& $\rho=0.12$\\
&\multicolumn{3}{c}{ Normal }  &\multicolumn{3}{c}{ Gamma}\\
CWST  & 0.0615   & 0.1713    &0.9820              &0.0596     & 0.1574    & 0.9725  \\
LWT     & 0.0694    & 0.3982   &   1           &0.2279      &0.6774     &  1 \\
NHT     &  0.1276   &  0.4152   &   1          &0.3067     &0.6321    & 1  \\
WST    &  1            &  1            &   1           & 1             & 1             & 1  \\\hline
& \multicolumn{6}{c}{ $\tiny{(n,p)=(500,320)}$ }  \\ \hline
&Sizes &\multicolumn{2}{c}{ Powers } &Sizes &\multicolumn{2}{c}{ Powers } \\
&$\rho=0$& $\rho=0.05$& $\rho=0.15$& $\rho=0$& $\rho=0.05$& $\rho=0.15$\\
&\multicolumn{3}{c}{ Normal }  &\multicolumn{3}{c}{ Gamma}\\
CWST & 0.0627     &0.1139     &0.9349     & 0.0640   &0.1127     & 0.9304  \\
LWT &0.0912         &0.4669     & 1             & 0.2887    & 0.7368    &   1\\
NHT & 0.1904        &0.4750    &   1            &  0.3857   & 0.6341    &  1 \\
WST& 1                 &1             &  1             & 1             &   1          & 1  \\\hline
\end{tabularx}
\end{center}
\end{table}

Note from the Table~\ref{tab:1},  it is  easily to find out the informations as below:

\begin{description}
\item[ (i) ]  The traditional  Wald's score test  (WST) is completely unworkable  for large dimensional  data in respect of their  empirical  sizes,  which all equals to 1 for any case listed in the table.
\item[ (ii) ]  The  empirical  sizes of the  test in  Nagao(1973) (NHT) also deviate far from the nominal test size $5\%$,   and they increase with the  dimension $p$, 
 especially  worse for Gamma Assumption. 
\item[ (iii) ]  For the cases of  the relatively smaller  dimensions in the table, like $p=80$  with $n=300$ or $p=160$  with $n=500$, the test in Ledoit and wolf (2002) (LWT)  behaves well and provides higher 
powers. However,  the empirical  sizes of LWT rise up against the nominal level as $p$ increases, and it shows a even worse result  under the Gamma Assumption,  where our proposed CWST is  still active.
\item[ (iv) ] The empirical sizes of our proposed test CWST
 are almost around the nominal size 5\%  for both distribution assumptions.   Although the empirical powers of
  our proposed CWST are not as higher as others' for the  very small $\rho$,     the other tests work well in one place but fail in another because of their rising empirical sizes.  Besides,，
   the empirical powers of
  the proposed CWST quickly increase to 1 with a slight  upward adjustment  of $\rho$.
 
\end{description}

Furthermore, it  was  showed in \cite{Cai}  that  CMT  provided the optimal powers, which  uniformly dominated that of the corrected LRTs by random matrix theory  over the asymptotic  regime.
 However, in contrast to CMT, the proposed CWST  performs  more steady empirical sizes,  which is supported by the following brief table~\ref{tab:2}. The corresponding power comparisons
 for both scenarios are similar to the ones  between  LWT and CWST under the Normal Assumption.  It shows a relatively slow ascent of the powers by our proposed test  due to the involvement of  the inverse of the sample covariance matrix. But the powers of our test  will quickly rise to 1  if the null and alternative hypotheses are not much close.   And if it will bring  serious consequences  when 
the null hypothesis  occurred but not found in the practice,  we should be strict  to Type I errors. Therefore, it is better to choose our proposed test in such a situation, which provides precise empirical sizes.
\begin{table}[htdp]
\caption{Empirical sizes of the  tests proposed  in  this paper(CWST) and  \cite{Cai}  (CMT)  for  $H_0 : \bS=\bI_p$  at $\alpha=0.05$ significance level.
The simulations are repeated  10,000 times   for Normal and Gamma Assumptions respectively.  \label{tab:2}}
\begin{center}
\begin{tabular}{l|cccl}
\hline

~~$n=300$~~&CWST &CMT &CWST &CMT \\ \hline
 &\multicolumn{2}{c}{ Normal } &\multicolumn{2}{c}{ Gamma } \\
~~$p=80$   & 0.0649   & 0.0710   & 0.0640    & 0.0772~~        \\
~~$p=120$ &0.0652    &0.0789    & 0.0687    & 0.0795       \\
~~$p=160$ &0.0682    &0.0872    & 0.0688    & 0.0905        \\
~~$p=200$ &0.0719    &0.0947    & 0.0713    & 0.0973        \\ \hline
~~$n=500$~~&CWST &CMT &CWST &CMT \\  \hline
 &\multicolumn{2}{c}{ Normal } &\multicolumn{2}{c}{ Gamma } \\ 
~~$p=80$   &0.0587    & 0.0620   &0.0584     & 0.0603        \\
~~$p=160$ & 0.0615   & 0.0699   & 0.0596    & 0.0678        \\
~~$p=240$ & 0.0623   & 0.0855   & 0.0603    & 0.0832        \\
~~$p=320$ & 0.0627   & 0.0914   & 0.0640    & 0.0953        \\  \hline
\end{tabular}
\end{center}
\label{default}
\end{table}%


\section{Conclusion}\label{Con}

In this paper, the  new tests for the covariance matrices structure  based on modification of Wald's score test are proposed by RMT.  They are feasible
for large dimensional data  without restriction of  population distribution and provide the accurate empirical sizes.  
However, it must be noted that the proposed CWST cannot be used for $p\geq n$, even if the case of $p/n \rightarrow 1$, since it is involved  with the inverse of the sample covariance matrix. That's also 
the reason why it gives   a relatively slow ascent of the powers.   This problem  is  brought by the  statistic  on which our correction is based, rather than the the idea of correction or 
the large dimensional CLT  in random matrix theory we used.  So  it can be expected that the similar idea of methods can  be used  for  other suitable statistics to avoid these disadvantages. For example,
Jiang(2015) 
proposed a   testing statistic for the large dimensional covariance structure test  based on amending Rao's score tests, which  is applicable for the case of $p>n$ and Non-Gaussian assumption. In the future,
we may also look forward to the applications of random matrix theory in more statistical inferences.
 
\section*{Acknowledgement}
The author thanks the reviewers for their helpful comments and suggestions to make an improvement of this article. This research was supported by the National Natural Science Foundation of China 11471140.





\appendix
\section{Proofs}\label{App}


%
\begin{itemize}

\vskip 0.2in
\item {\bf  Calculation of  $F^{q_n} (f)$ in  (\ref{limitWST}).}
\vskip 0.2in

Because 
\[
F^{q_n}(f)=\int_{-\infty}^{\infty}f(x) \mb(d) F^{q_n}(x)=\int_{-\infty}^{\infty}\frac{(1-\displaystyle\frac{1}{x})^2}{2\pi x q_n}\sqrt{(b_n-x)(x-a_n)}\md x
\]
Use the substitution  $x=1+q_n-2\sqrt{q_n}\cos \theta=-2\sqrt{q_n}(\cos\theta+d_0)$, where $0 \leq \theta \leq \pi $ and  $d_0=-\displaystyle\frac{1+q_n}{2\sqrt{q_n}}$ is a constant.
  Then 
\begin{eqnarray*}
  &&F^{q_n}(f)=\int^{b_n}_{a_n}\frac {(1-\displaystyle\frac{1}{x})^2}{2\pi
    xq_n}\sqrt{(b_n-x)(x-a_n)} \md x\\
  &&=\frac{2}{\pi
    }\int_0^{\pi}\frac{(q_n-2\sqrt{q_n}\cos\theta)^2}
    {(1+q_n-2\sqrt{q_n}\cos\theta)^3}\sqrt{\sin^2\theta}\sin\theta \md\theta\\
  &&=\frac{1}{\pi
    }\int_0^{2\pi}\frac{[-2\sqrt{q_n}(\cos\theta+d_0)-1]^2\sin^2\theta }
    {\left[-2\sqrt{q_n}(\cos\theta+d_0)\right]^3}\md\theta\\
    &&=\frac{1}{\pi
    }\int_0^{2\pi}\left[\frac{\sin^2\theta}{-2\sqrt{q_n}(\cos\theta+d_0)}-\frac{2\sin^2\theta}{4q(\cos\theta+d_0)^2}-\frac{\sin^2\theta  }{8q_n^{\frac{3}{2}}(\cos\theta+d_0)^3}\right]\md\theta\\
\end{eqnarray*}

First, we have 
\begin{eqnarray}
\int_{0}^{2\pi}\frac{1}{\cos\theta+d_0}\md \theta&=&\int_{0}^{2\pi}\frac{1}{\cos^2\frac{\theta}{2}-\sin^2\frac{\theta}{2}+d_0}\md \theta\nonumber\\
&=&\int_{0}^{2\pi}\frac{2\md \frac{\theta}{2}}{(1-\tan^2\frac{\theta}{2}+d_0\sec^2\frac{\theta}{2})\cos^2\frac{\theta}{2}}\nonumber\\
&=&\int_{0}^{2\pi}\frac{2\md \tan\frac{\theta}{2}}{d_0+1+(d_0-1)\tan^2\frac{\theta}{2}}\nonumber\\
&=&\frac{2}{\sqrt{d_0^2-1}}\displaystyle\int_{0}^{2\pi}
\frac{1}{1+\left(\sqrt{\frac{d_0-1}{d_0+1}}\tan\frac{\theta}{2}\right)^2}
\md \left(\sqrt{\frac{d_0-1}{d_0+1}}\tan\frac{\theta}{2}\right)\nonumber\\
&=&\frac{-2\pi}{\sqrt{d_0^2-1}}\label{eqA1chucosd}
\end{eqnarray}
So the first part of the integral, 
\begin{eqnarray}
&&\int_0^{2\pi}-\frac{\sin^2\theta}{2\sqrt{q_n}(\cos\theta+d_0)}\md\theta \nonumber\\
&=&-\frac{1}{2\sqrt{q_n}}\int_0^{2\pi}\frac{1-\cos^2\theta  }{\cos\theta+d_0}\md\theta\nonumber\\
&=&-\frac{1}{2\sqrt{q_n}}\int_0^{2\pi} (-\cos\theta+d_0+\frac{1-d_0^2}{\cos\theta+d_0})\md\theta\nonumber\\
&=&\frac{\pi}{2q_n}(1+q_n-|1-q_n|)\label{eqA1sin2chucosd}
\end{eqnarray}

Secondly,
the middle  part of the integral can be calculated by the equation eq.(\ref{eqA1chucosd}), which is 
\begin{eqnarray}
&&\int_0^{2\pi}-\frac{2\sin^2\theta}{4q_n(\cos\theta+d_0)^2}\md\theta=\frac{1}{2q_n}\int_0^{2\pi}\left(\frac{-1}{\cos\theta+d_0}\right)'\sin \theta\md \theta\nonumber\\
&&~~~~~~=\frac{1}{2q_n}\int_0^{2\pi}\frac{\cos\theta}{\cos\theta+d_0}\md \theta=\frac{\pi}{q_n}\left(1-\frac{1+q_n}{|1-q_n|}\right).\label{eqA2sin2chucosd2}
\end{eqnarray}
For the third part of the limiting integral  $F^{q_n}(f)$, the following integral is calculated first.
\begin{eqnarray*}
&&\int_{0}^{2\pi}\frac{1}{(\cos\theta+d_0)^2}\md \theta\nonumber\\
&=&\int_{0}^{2\pi}\frac{\sin^2\theta}{(\cos\theta+d_0)^2}\md \theta 
+\int_{0}^{2\pi}\frac{\cos\theta-d_0}{\cos\theta+d_0}\md \theta
+\int_{0}^{2\pi}\frac{d_0^2}{(\cos\theta+d_0)^2}\md \theta\nonumber
\end{eqnarray*}
Then, by the eq. (\ref{eqA1chucosd}) and (\ref{eqA2sin2chucosd2}), we arrive at 
\begin{eqnarray}
&&\int_{0}^{2\pi}\frac{1}{(\cos\theta+d_0)^2}\md \theta\nonumber\\
&=&\frac{1}{d_0^2-1}\left[-\int_{0}^{2\pi}\frac{\sin^2\theta}{(\cos\theta+d_0)^2}\md \theta
-\int_{0}^{2\pi}\frac{\cos\theta-d_0}{\cos\theta+d_0}\md \theta\right]\nonumber\\
&=&\frac{8\pi q_n(1+q_n)}{|1-q_n|^3}\label{eqA1chucosd2}
\end{eqnarray}
So the third part of the limiting integral  $F^{q_n}(f)$ is 
\begin{eqnarray*}
&&\int_0^{2\pi}-\frac{\sin^2\theta}{8q_n^{\frac{3}{2}}(\cos\theta+d_0)^3}\md\theta\\
&=&\frac{1}{16q_n^{\frac{3}{2}}}\int_0^{2\pi}\left[\frac{-1}{(\cos\theta+d_0)^2}\right]' \sin \theta\md\theta
=\frac{1}{16q_n^{\frac{3}{2}}}\int_0^{2\pi}\frac{\cos \theta}{(\cos\theta+d_0)^2}\md\theta\\
&=&\frac{1}{16q_n^{\frac{3}{2}}}\left[\int_0^{2\pi}\frac{1}{\cos\theta+d_0}\md\theta-d_0\int_0^{2\pi}\frac{1}{(\cos\theta+d_0)^2}\md\theta\right]
=\frac{\pi}{|1-q_n|^3}
\end{eqnarray*}

Above all, $F^{q_n}(f)$ is presented below
\begin{eqnarray*}
F^{q_n}(f)&=& \frac{1}{\pi}\left[\frac{\pi}{2q_n}(1+q_n-|1-q_n|)+\frac{\pi}{q_n}\left(1-\frac{1+q_n}{|1-q_n|}\right)
+\frac{\pi}{|1-q_n|^3}\right]\\
&=&
\left\{
\begin{array}{cc}
  1-\displaystyle\frac{2}{(1-q_n)}+\displaystyle\frac{1}{(1-q_n)^3},& \quad\quad  ~~\mb{if} ~ 0 \leq q_n<1 ,  
  \\[4mm]
  \displaystyle\frac{1}{q_n}-\displaystyle\frac{2}{q_n(q_n-1)}+\displaystyle\frac{1}{(q_n-1)^3},& ~~\mb{if} ~ q_n>1.  
\end{array}
\right.
\end{eqnarray*}

It is worth to note that the $F^{q_n}(f)$
should plus the term $(1-1/x)^2\cdot (1-\displaystyle\frac{1}{q_n} ) $ at the origin $x=0$
if $q_n>1$, which is obviously infinity. That's  one of reasons that the correction to Wald's score test can only be used as $q_n<1$.
Then we arrive at 
\[F^{q_n}(f)= 1-\displaystyle\frac{2}{(1-q_n)}+\displaystyle\frac{1}{(1-q_n)^3}, \quad \text{if }  \quad  0 \leq q_n<1.\]


\vskip 0.2in
\item   {\bf Calculation of  $\mu (f)$ in (\ref{meanWST}).}
\vskip 0.2in

In the same way, with $H(t)=\mb{I}_{[1,\infty}(t)$, the first part of $\mu(f)$  is also obtained  by (9.12.13) in \cite{bookBS10}, 
$$
\mu_1(f)=(\kappa-1)\cdot\left(\frac{f\left(a(q)\right)+f\left(b(q)\right)}{4} -
\frac{1}{2\pi}\int_{a(q)}^{b(q)}\frac{f(x)}{\sqrt{4q-(x-1-q)^2}}\md x\right)
$$
where $a(q)=(1-\sqrt{q})^2$ and  $b(q)=(1+\sqrt{q})^2$.
For $f(x)= (1-\displaystyle\frac{1}{x})^2$, make a substitution
$x=1+q-2\sqrt{q}\cos\theta,~ 0\leq \theta \leq \pi$, then  

\begin{eqnarray*}
\mu_1(f)&=&(\kappa-1)\left(\frac{f\left(a(q)\right)+f\left(b(q)\right)}{4} -\frac{1}{4\pi}\int_{0}^{2\pi}
f(1+q-2\sqrt{q}\cos\theta)\md \theta\right)\\
&=&(\kappa-1)\bigg(\frac{q^4-6q^3+9q^2+4q}{2(1-q)^4}\\
&&\quad\quad\quad-\frac{1}{4\pi}\int_0^{2\pi}
\left[1-\frac{2}{-2\sqrt{q}(\cos\theta+d^*_0)}+\frac{1}{4q(\cos\theta+d^*_0)^2}\right] \md \theta\bigg)\\
&=&(\kappa-1)\left(\frac{q^4-6q^3+9q^2+4q}{2(1-q)^4}-\frac{1}{2}+\frac{1}{|1-q|}-\frac{1+q}{2|1-q|^3}\right)
\end{eqnarray*}
where $d^*_0=-\displaystyle\frac{1+q}{2\sqrt{q}}$ is an  analogy to the constant $d_0$ with $q$ instead of $q_n$, and the calculation can also be based on on (\ref{eqA1chucosd}) and (\ref{eqA1chucosd2}). Because the correction to Wald's score test can only be applied to the case $q_n<1$, so we only choose the first case
\[\mu_1(f)=\frac{-(\kappa-1)q(2q^2-5q-1)}{(1-q)^4},\quad \mb{if} ~ 0 \leq q<1.
 \]

For the second part of $\mu(f)$,  by  (\ref{04mean2})  we have
\[ 
\mu_2(f)=- \frac{\beta q }{2 \pi i} \oint (1-\frac{1}{z})^2\frac{\underline{m}^3(z)}{(1+\underline{m}(z))[(1-q)\underline{m}^2(z)+2\underline{m}(z)+1]} \md z,
\]
Recall the equation (9.12.12)   in \cite{bookBS10}
\begin{equation}
 z=-\frac{1}{\underline{m}(z)}+\frac{q}{1+\underline{m}(z)}, \label{zzmm}
\end{equation}
 it is easily obtained that 
\[(1-\frac{1}{z})^2=\frac{[m^2-(q-2)m+1]^2}{[(q-1)m-1]^2}  \quad \quad \md z = \frac{(1-q)m^2+2m+1}{m^2(1+m)^2} \md m\]
where  $\underline{m}(z)$  is denoted as $m$  for simplicity if no confusion. Then we have 
\[ 
\mu_2(f)= -\frac{\beta q }{2 \pi i (q-1)^2} \oint \frac{m[m^2-(q-2)m+1]^2}{(m-\frac{1}{q-1})^2(1+m)^3} \md m,
\]
and the contour for the integral  of $m$  is obtained by solving the equation (\ref{zzmm}),  which  should  enclose the interval 
$
\left[-\frac{1}{1-\sqrt{q}},   -\frac{1}{1+\sqrt{q}}\right]
$ when $0\leq q< 1$.
Therefore,  -1 and $\frac{1}{q-1}$ are  the residues if $q<1$,   and the integral is calculated as 
\[\mu_2(f)=\frac{\beta q(2q^2-3q-1)}{(q-1)^3}. \]

 Finally, we  obtained 
 \[\mu(f)=\frac{-(\kappa-1)q(2q^2-5q-1)}{(1-q)^4}+\frac{\beta q(2q^2-3q-1)}{(q-1)^3}.\]  

 \vskip 0.2in
\item  {\bf  Calculation of  $\upsilon (f)$.}
 \vskip 0.2in

By Lemma~\ref{CLT},  we have 
\begin{eqnarray*}
\upsilon\left(f_j,
f_\ell\right)&=&-\frac{\kappa}{4\pi^2}\oint\oint\frac{f_j(z_1)f_\ell(z_2)}{(\underline{m}(z_1)-\underline{m}(z_2))^2}
\md\underline{m}(z_1)\md \underline{m}(z_2)  \\
&&-\frac{\beta q}{4\pi^2}\oint\oint\frac{f_j(z_1)f_\ell(z_2)}{(1+\underline{m}(z_1))^2(1+\underline{m}(z_2))^2}
\md\underline{m}(z_1)\md \underline{m}(z_2), 
\end{eqnarray*}
 and
\begin{eqnarray*}
f(z_1)f(z_2)&=&(1-\frac{1}{z_1})^2(1-\frac{1}{z_2})^2\\
&=&1-\frac2{z_1}-\frac2{z_2}+\frac1{z_1^2}+\frac1{z_2^2}+\frac4{z_1z_2}
-\frac2{z^2_1z_2}-\frac2{z_1z_2^2}+\frac1{z^2_1z^2_2}.
\end{eqnarray*} 

It is known that  $\upsilon(\textbf{1},\textbf{1})=0$,  where  \textbf{1} denote constant function which equals to 1.
For $z \in \mathbb{C}^{+}$, the following equation is given in \cite{bookBS10},
 \begin{equation}
 z=-\frac{1}{\underline{m}(z)}+\frac{q}{1+\underline{m}(z)}.
 \end{equation}\\
Still use  $m_i$ to simplify  $\underline{m}(z_i),~~i=1,2.$   For fixed $m_2$,  there is a contour enclosed $(q-1)^{-1}$ and -1 as  poles when $0\leq q<1$. 
 Then, for the first item of    $\upsilon (f)$, we have 
 \begin{eqnarray*}
   &\quad&\displaystyle {\oint \displaystyle\frac{1}{z_1}\cdot \frac{1}
     {(m_1-m_2)^2}\md m_1}\\
   &=&\displaystyle{\oint \frac{m_1(1+m_{1})}{(q-1)(m_1-\frac{1}{q-1})(m_1-m_2)^2} \md m_{1}}\\
      &=&\displaystyle{2\pi i\cdot  \frac{q}{(q-1)^3(m_2-\frac{1}{q-1})^{2}}}.
 \end{eqnarray*}
  and 
  \begin{eqnarray}
  &&\displaystyle {\oint \frac{1}{z_1^2}\cdot\frac{1}
     {(m_1-m_2)^2}\md m_1}\nonumber\\
     &=&\displaystyle{\oint \frac{m_1^2(1+m_{1})^2}{(q-1)^2(m_1-\frac{1}{q-1})^2}  (m_2-\frac{1}{q-1})^{-2}\left(1-\frac{m_1-\frac{1}{q-1}}{m_2-\frac{1}{q-1}}\right)^{-2}\md m_{1}}\nonumber\\
     &=& 4\pi i \left[\frac{q(1+q)}{(q-1)^5(m_2-\frac{1}{q-1})^2}+\frac{q^2}{(q-1)^6(m_2-\frac{1}{q-1})^3} \right]. \label{eqvar1chuz2}
\end{eqnarray}
So $\upsilon(\displaystyle\frac{1}{z_1^2}-\displaystyle\frac{2}{z_1}, ~\textbf{1})=0$.
Similarly,
$\upsilon(\textbf{1}, ~ \displaystyle\frac{1}{z_1^2}-\displaystyle\frac{2}{z_1})=0$.\\
Therefore, there are only four parts left, i.e. $\frac4{z_1z_2}
-\frac2{z^2_1z_2}-\frac2{z_1z_2^2}+\frac1{z^2_1z^2_2}$.   
Further,
\begin{eqnarray*}
\upsilon(\frac{1}{z_1},\frac{1}{z_2})
&=&-\frac{\kappa}{4\pi^2}\oint \frac{1}{z_2}\oint \frac{1}{z_1} \frac1{(m_1-m_2)^2}
\md  m_1 \md m_2 \\
&=&\displaystyle{\frac {\kappa q}{2\pi i (q-1)^3}\oint \frac{m_2(1+m_2)}{(q-1)(m_2-\frac{1}{q-1})^3}
\md m_2}\\
&=&\frac{\kappa q}{(q-1)^4}\end{eqnarray*}

\begin{eqnarray*}
\upsilon(\frac{1}{z_1},\frac{1}{z^2_2})&=&-\frac{\kappa}{4\pi^2}\oint \frac{1}{z^2_2}\oint \frac{1}{z_1} \frac1{(m_1-m_2)^2}
\md  m_1 \md m_2\\
     &=&\displaystyle{\frac {\kappa q}{2\pi
i (q-1)^5}\oint \frac{m^2_2(1+m_2)^2}{(m_2-\frac{1}{q-1})^4}
\md m_2}\\
&=& \frac{2 \kappa q(1+q)}{(q-1)^6}.
\end{eqnarray*}
Similarly, $
\upsilon(\frac{1}{z^2_1},\frac{1}{z_2})
= \displaystyle \frac{2 \kappa q(1+q)}{(q-1)^6}$.
For the last part $\upsilon(\displaystyle\frac{1}{z^2_1},\displaystyle\frac{1}{z^2_2})$, the integral is calculated  by eq.(\ref{eqvar1chuz2}) as below.
\begin{eqnarray*}
&&\upsilon(\frac{1}{z^2_1},\frac{1}{z^2_2})\\
&&=-\frac{\kappa}{4\pi^2}\oint \frac{1}{z^2_2}\oint \frac{1}{z^2_1} \frac1{(m_1-m_2)^2}
\md  m_1 \md m_2\\
   &&=\frac{\kappa}{\pi i}\oint \frac{m_2^2(1+m_{2})^2}{(q-1)^2(m_2-\frac{1}{q-1})^2} \left[\frac{q(1+q)}{(q-1)^5(m_2-\frac{1}{q-1})^2}+\frac{q^2}{(q-1)^6(m_2-\frac{1}{q-1})^3} \right]\md m_2 \\
   &&= \frac{2\kappa q(1+2q)(q+2)}{(q-1)^8}
\end{eqnarray*}
So the first item of    $\upsilon (f)$ is 
\begin{eqnarray*}
\upsilon_1(f)&=&4\upsilon(\frac{1}{z_1},\frac{1}{z_2})-2\upsilon(\frac{1}{z^2_1},\frac{1}{z_2})-2\upsilon(\frac{1}{z_1},\frac{1}{z^2_2})+\upsilon(\frac{1}{z^2_1},\frac{1}{z^2_2})\\
&=&\frac{2\kappa q^2(2q^3-12q^2+18q+1)}{(q-1)^8}\end{eqnarray*}
when $0 \leq q<1$.

Secondly, the latter item of    $\upsilon (f)$ is 
\[
\upsilon_2(f)=-\frac{\beta q}{4\pi^2}\oint\oint\frac{f(z_1)f(z_2)}{(1+m_1)^2(1+m_2)^2}
\md m_1\md m_2.
\]
Furthermore,
\begin{eqnarray*}
\oint\frac{f(z_1)}{(1+m_1)^2}\md m_1
&=&\oint \frac{[m_1^2-(q-2)m_1+1]^2}{[(q-1)m_1-1]^2(1+m_1)^2} \md m_1\\
&=&4\pi i \frac{q(2-q)}{(q-1)^3}
\end{eqnarray*}
since the  contour contains 
$\frac{1}{q-1}$ and  -1 as a residues if  $0\leq q \leq 1$.
Thus we get
\[
\upsilon_2(f)=-\frac{\beta q}{4\pi^2} \cdot \left(4\pi i \frac{q(2-q)}{(q-1)^3} \right)^2=\frac{4 \beta q^3(2-q)^2}{(q-1)^6}\]

 Finally, we  obtained 
 \[\upsilon(f)=\frac{2\kappa q^2(2q^3-12q^2+18q+1)}{(q-1)^8}+\frac{4 \beta q^3(2-q)^2}{(q-1)^6}.\]  
\end{itemize}




\end{document}